\newtheorem{theorem}{{\bf Theorem}}
\def\QED{\mbox{\rule[0pt]{1.5ex}{1.5ex}} \vspace{0.2cm}}
\def\cqfd{\hspace*{\fill}~\QED\par\endtrivlist\unskip}
\newenvironment{theorem-repeat}[1]{\begin{trivlist}
\item[\hspace{\labelsep}{\bf\noindent Theorem~\ref{#1} }]}%
{\end{trivlist}}
\newcommand{\toto}{xxx}
\newenvironment{lemma-repeat}[1]{\begin{trivlist}
\item[\hspace{\labelsep}{\bf\noindent Lemma~\ref{#1} }]}%
{\end{trivlist}}
\newcommand{\titi}{xxx}
\newenvironment{proposition-repeat}[1]{\begin{trivlist}
\item[\hspace{\labelsep}{\bf\noindent Lemma~\ref{#1} }]}%
{\end{trivlist}}
\newcommand{\tutu}{xxx}
\def\QED{\mbox{\rule[0pt]{1.5ex}{1.5ex}} \vspace{0.2cm}}
\def\cqfd{\hspace*{\fill}~\QED\par\endtrivlist\unskip}
\begin{document}

\newcommand{\DS}[1]{{\displaystyle #1}}
\newcommand{\eps}{\varepsilon}
\newcommand{\UL}[1]{{\underline #1}}
\newcommand{\n}{\underline{n}}
\newcommand{\e}{\nbu_{k,m}}
\newcommand{\nbR}{\mathbbm{R}}
\newcommand{\nbN}{\mathbbm{N}}
\newcommand{\nbQ}{\mathbbm{Q}}
\newcommand{\nbC}{\mathbbm{C}}
\newcommand{\nbu}{\mathbbm{1}}
\newcommand{\nbI}{\mathbbm{I}}
\newcommand{\nbP}{\mathbbm{P}}
\newcommand{\nbZ}{\mathbbm{Z}}
\newcommand{\nbK}{\mathbbm{K}}
\newcommand{\nbE}{\mathbbm{E}}
\newcommand{\nbun}{1}
\newcommand{\ten}{\longrightarrow}
\newcommand{\s}{{\cal S}}
\renewcommand{\Pr}{\nbP}

\newcommand{\thetathank}{\thanks{Supported by
     the Direction G\'en\'erale des Entreprises - P2Pim@ges project}}



\title{Analytical Study of Adversarial Strategies in Cluster-based Overlays}

\author{E. Anceaume \\
CNRS/IRISA,  France\\
 \and  R. Ludinard,  B. Sericola\\
INRIA Rennes Bretagne-Atlantique, France\\
\and F. Tronel\\
Supelec, France\\
\and F. Brasiliero\\
Universidade Federal de Campina Grande, LSD Laboratory, Brazil
}

\maketitle

\begin{abstract}

Scheideler  has  shown that peer-to-peer overlays networks can only survive Byzantine attacks if malicious nodes are  not able to predict what is going to be the topology of the network for a given sequence of join and leave operations.   In this paper we investigate adversarial strategies by  following specific games. Our analysis  demonstrates  first that
 an adversary can very quickly  subvert DHT-based overlays by simply never triggering leave operations. We then show  that  when all nodes (honest and malicious ones) are imposed on a limited lifetime, the system eventually reaches a stationary regime where the ratio of polluted  clusters  is bounded, independently from the initial amount of corruption in the system.
 \end{abstract}

\section{Introduction}

The adoption of peer-to-peer overlay networks as a building block for
architecting Internet scale systems has raised the attention of making
these overlays resilient not only to benign crashes, but also to more
malicious failure models for the peers~\cite{CDGRW02,singhinfocom06,sitmorris02,SL04}. 
As a result, Byzantine-resilient overlay systems have been
proposed (e.g.,~\cite{robustchord2005,BaumgartM07,ABLR08}). The key to achieve 
Byzantine resilience in a peer-to-peer overlay is
to prevent malicious peers from isolating correct ones. This in turn,
can only be achieved if malicious peers are not able to predict what
will be the topology of the overlay for a given sequence of join and
leave operations. Hence, a prerequisite for this condition to hold is to guarantee 
that malicious nodes are well-mixed with honest ones, that is nodes identifiers randomness 
 is continuously preserved. Unfortunately, targeted join/leave attacks 
may quickly endanger the relevance of such
assumption. Actually by holding a logarithmic number of IP addresses, an adversary can very easily and efficiently disconnect some target from the rest of the system. This can be achieved  in a linear number of offline trials~\cite{AS04}. 
Awerbuch and Scheideler~\cite{AS07}  have analysed 
several ways to make overlay networks provably 
robust against
different forms of malicious attacks, and in particular targeted join/leave attacks, through competitive
 algorithms. All these solutions are based on
the introduction of locally induced churn to prevent the adversary from
thwarting randomness. The same authors have shown that 
despite the high level of randomness introduced in each of these strategies, most of them are either incorrect, 
 or  they 
  involve tight synchronization among nodes which becomes unbearable in the context we address, namely targeted and frequent join/leave attacks. The other proposed approach based on globally induced churn, enforce limited lifetime for
  each node in the system. However, these solutions  keep the system in an unnecessary  hyper-activity, and thus need to impose strict
restrictions on  nodes joining rate which
clearly limit their applicability to open systems.

In this paper we propose to leverage the power of clustering to design a  practically usable solution that preserves randomness under an $\epsilon$-bounded adversary. 
Our solution relies on the clusterized version of peer-to-peer overlays combined with a mechanism that allows the enforcement of  limited nodes lifetime. Clusterized versions of structured-based overlays are such that  clusters of nodes substitute  nodes at the vertices of the graph. 
Cluster-based overlays have revealed to be well adapted for efficiently reducing the impact of churn on the system and/ or in greatly reducing the damage caused by failures---assuming that  failures assumptions hold anywhere and at any time in the system~\cite{ABLR08,equus,robustchord2005}. 

The contributions of the paper are two-fold. First we investigate adversarial strategies by  following specific games. Our analysis  demonstrates  that
 an adversary can very quickly  subvert cluster-based overlays by simply never triggering leave operations. We then show  that  when nodes are imposed on a limited lifetime and under the assumption that we are able to enforce the adversary to leave the system after expiration of its ID,  the system eventually reaches a stationary regime where the ratio of polluted  clusters  is bounded. Second we propose a simple and generic  mechanism to limit nodes  lifetime in those systems.

The remainder of this paper is as follows:  In Section~\ref{sec:peercube} we briefly describe the main features of cluster-based overlays, and propose a mechanism that enables the enforcement of limited nodes lifetime. In Section~\ref{sec:model}, we model adversarial behaviours through the use of games. We study the outcome of these games by using a Markovian analysis. In this section, we consider a non restricted adversary. Section~\ref{sec:constraint-model} is devoted to the same study in the case of a restricted adversary. Finally, we conclude with future  works.

\section{Cluster-based DHT Overlays  in a Nutshell}
\label{sec:peercube}
In this section we first present the common features of cluster-based overlays and then present different join/leave strategies whose long term behaviors are analysed in Section~\ref{sec:model}.  

Clusterized versions of structured-based overlays are such that  clusters of nodes substitute  nodes at the vertices of the graph.  Nodes are  uniquely identified with some $m$-bit string
randomly chosen from an ID-space. Identifiers (IDs)  are
derived by using standard collision-resistant one-way hash
functions (e.g.,~\cite{MD5}). Each graph vertex is composed of a set of
nodes  self-organised within a cluster according to some distance metrics (e.g., logical or geographical). Clusters in the system are uniquely labelled.  Size of each cluster is  lower  (resp. upper) bounded. The lower bound,  named $S_{min}$  in the following, usually satisfies some constraint based on the assumed failure model. For instance $S_{min} \geq4$  allows Byzantine tolerant agreement protocols to be run among these $S_{min}$ nodes~\cite{Lamport82}. The upper bound, that we call $S_{max}$, is typically in  $\mathcal O(log N)$, where  $N$ is the current number of nodes in the system, to meet  scalability requirements. 
When a cluster size reaches these bounds, cluster-based overlays react by
respectively splitting that cluster into two smallest clusters or by
merging it with its closest cluster neighbours. Finally for most of the cluster-based overlays, operations (\texttt{join}, \texttt{leave}, \texttt{merge}, and \texttt{split}) are poly-logarithmic in the number of nodes in the system.

In the present work we assume that at cluster level  nodes 
are organised as core and spare
members.  
Members of the core set are primarily responsible for handling messages
routing and clusters operations. Management of the core set is such that its size is
maintained to constant $S_{min}$.  Spare members are the complement
number of nodes in the cluster. In contrast to core members, they are not
involved in any of the overlay operations. Rationale of this classification is two-fold: first it allows to introduce the
unpredictability required to deal with Byzantine attacks through a randomized core set generation algorithm.  Second it 
limits  the management overhead caused by the natural churn present
in typical overlay networks through the spare set management. 

Specifically we consider the following join and leave operations: 
 \begin{itemize}
  \item \texttt{join}(p): when a peer \texttt{joins} a cluster, it joins it as a spare member.
  \item \texttt{leave}(p): 
  When a peer $p$ \texttt{leaves} a cluster either $p$ belongs to the spare set or  to the core set. In the former case, core
  members simply update their spare view to reflect $p$'s departure,
   while in the latter case, the core view maintenance procedure  is triggered.  Two different
  maintenance policies are implemented. The first one, referred in the
  following as \emph{policy 1}, simply consists in
  replacing the left core member by one randomly chosen spare
  member. The second one, referred as  \emph{policy 2},
  consists in refreshing the whole core set by choosing $S_{min}$
  random peers within the cluster. 
  \end{itemize}
For space reasons we do not give any detail regarding the localization of a cluster nor its creation/split/merge process. None of these operations are necessary for the understanding of our work. The interested reader is invited to read their description in the original papers (e.g.~\cite{ABLR08,equus,robustchord2005}). 

\subsection{Implementing a limited nodes lifetime}

To implement limited nodes lifetime, we propose to proceed   as follows:  Peers identifiers are generated based on
certificates acquired at trustworthy Certification Authorities
(CAs). Identifiers (denoted IDs) are  generated as the result of applying
a hash function  to some of the fields of a
X.509~\cite{RFC2459}  certificate.
To enforce  all peers, including malicious ones,  leaving and rejoining  the
system from time to time, we add a {\em incarnation} number to the
fields that appear in the peer's certificate  that will be hashed
to generate the peer's ID. The incarnation number limits the lifetime
of IDs. The current incarnation $k$ of any peer is given by the
following expression $k = \lceil (CT - IVT) \rceil / IL$,
where $IVT$ is the initial validity time of the peer's certificate, $CT$
is the current time, and $IL$ is the length of the lifetime of each
peer's incarnation. Thus, the $k^{th}$ incarnation of a peer $p$ expires
when its local clock reads $IVT+k*IL$. At this time $p$ must rejoin the
system using its $(k+1)^{th}$ incarnation.
The $IVT$ is one of the fields in the peer's certificate and since
certificates are signed by the CA, it cannot be unnoticeably modified
by a malicious peer. Moreover, a certificate commonly contains the
public key of the certified entity. This way, messages exchanged by
the peers can be signed using this key, preventing malicious peers
from unnoticeably altering messages originated from other peers in the
system. Messages must contain the certificate of their issuer, so as  to
allow recipients to validate them. Therefore, at any time, any peer
can check the validity of the ID of any other peers in the system, by
simply calculating the current incarnation of the other peer and
generating the corresponding ID. If some peer detects that the ID of
one of its neighbours is not valid then it cuts its connection with it.
Note that because clocks are loosely synchronised, it is
possible that a correct peer is still using its ID for incarnation $k$
when other correct peers would expect it to be in incarnation
$k+1$. To mitigate this problem, we assume that any correct peer may
have two subsequent valid incarnation numbers, for a fixed grace
window $GW$ of time that encompasses the expiration time of an
incarnation number ($GW$ is the maximum deviation of the clocks of any
two correct peers). More precisely, at any time $t$, both incarnation $k$
and $k'$ are valid, where:
$k = \lceil (t - GW/2 - IVT) \rceil / IL$, and
$k' = \lceil (t + GW/2 - IVT) \rceil / IL$.
Notice that this means that although at any time $t$ each peer $p$ has a
single incarnation number that it uses to define its current ID, other
peers calculate two possible incarnation numbers for $p$. These are 
frequently equal, but may differ  when $p$'s local time  is
close to the expiration time of its current/last incarnation.

\section{Modelling the adversarial strategy as a game}
\label{sec:model}

In this section, we investigate the previously described policies (policy 1 and 2). We model 
adversarial behavior by focusing on specific games. Both games intend
to prevent the adversary from elaborating deterministic strategies to
win. These games are played in the following context.  There is a
potentially infinite number of balls in a bag, with a proportion $\mu$
of red balls and a proportion $1-\mu$ of white balls, $\mu$ being a
constant in $(0,1)$.  White (resp. red) balls are
indistinguishable. Red balls are owned by the adversary. In addition
to the bag, there are two urns, named $\mathcal C$ and $\mathcal
S$. Initially, $c+s$ balls are drawn from the bag such that $c$ of
them are thrown into urn $\mathcal C$, and the other $s$ ones are
thrown into urn $\mathcal S$.  We denote by $C_r$ (resp. $S_r$) the
number of red balls in $\mathcal C$ (resp. $\mathcal S$). It is easily
checked that $C_r$ and $S_r$ are independent and have a binomial distribution, 
i.e. for
$x=0,\ldots,c$ and $y=0,\ldots,s$, we have
\begin{equation} \label{init}
\begin{array}{rcl}
\Pr\{C_r = x,S_r = y\} &=& \Pr\{C_r = x\}\Pr\{S_r = y\}\\
                    &   = & {c \choose x} \mu^x (1-\mu)^{c-x}
                         {s \choose y} \mu^y (1-\mu)^{s-y}.
                         \end{array}
\end{equation}
This joint distribution represents the initial distribution of the process 
detailed below.
Each  game is a succession of rounds $r_1,r_2, \ldots$ during which the game 
rule 
described in Figure~\ref{fig:game} is applied. 
Rules are oblivious to the colour of the balls, that is, 
they cannot distinguish between the white and the red balls. 

\begin{figure}\footnotesize
\centering

\hrule
\begin{minipage}[t]{6cm}

\begin{tabbing}
\renewcommand{\baselinestretch}{2.5}
 xx\=xx\=xx\=xx\=xx\=xx\=xx\=xx\=xx\=xx\=xxx\=xxx\=xxx\=\kill
 /* \textbf{First  game  }*/\\
/* \emph{stage 1} */\\
\>draw  ball $b_0$ from  $\mathcal C \cup \mathcal S$ \\ 
/* \emph{stage 2} */\ \\
\> \textbf{if} $b_0$ was in $\mathcal S$ \textbf{then} \\
 \>\> throw $b_0$ into the bag\\
 \>\> draw ball $b_2$ from the bag\\
 \>\> throw it into $\mathcal S$\\
\> \textbf{else}\\
 \>\> throw $b_0$ into the bag\\
\>\> draw ball $b_1$ from $\mathcal S$ \\
\>\> throw it into $\mathcal C$\\
\> \> draw ball $b_2$ from the bag \\
\>\> throw it into $\mathcal S$ 
\end{tabbing}
\end{minipage} \hspace{2cm}
\begin{minipage}[t]{6cm}
\begin{tabbing}
\renewcommand{\baselinestretch}{2.5}
 x\=xx\=xx\=xx\=xx\=xx\=xx\=xx\=xx\=xx\=xxx\=xxx\=xxx\=\kill
 /* \textbf{Second game  }*/\\
/* \emph{stage 1} */\\
\> draw  ball $b_0$ from  $\mathcal C \cup \mathcal S$ \\ 
/* \emph{stage 2} */\ \\
\>  \textbf{if} $b_0$ was in $\mathcal S$ \textbf{then} \\
 \>\> throw $b_0$ into the bag\\
 \>\> draw ball $b_2$ from the bag \\
 \>\> throw it into $\mathcal S$\\
\>  \textbf{else}\\
 \>\> throw $b_0$ into the bag\\
\>\> draw $c$ balls from $\mathcal S \cup \mathcal C$\\
\>\> throw these $c$ balls into $\mathcal C$\\
\>\> draw  one ball $b_2$ from the bag \\
\>\> throw it in $\mathcal S$
\end{tabbing}
\end{minipage}
 \hrule
\caption{\small Rule of the first and second game.}
\label{fig:game}
\end{figure}

The goal of the adversary is to get a quorum $Q$ of red balls in both urns $\mathcal C$ and $\mathcal S$
 so that  the number of red balls  in $\mathcal C$ 
is bound to continuously exceed   $\lfloor (c-1)/3 \rfloor$.  An intuition of why
having more than $\lfloor (c-1)/3 \rfloor$ red balls in urn
$\mathcal C$ is necessary for polluting it  is related to agreement
problems in distributed systems in presence of Byzantine processes.
 The value of quorum $Q$ is derived in the sequel.  
 The adversary may at any time inspect both urns and bag to elaborate
adversarial strategies to win the game. In particular it may not
follow the rule of the games by preventing its red balls from being
extracted from both urns. Specifically, at stage 1 of both games, if
the drawn ball $b_0$ is red then the adversary puts back the ball into
the urn from which it has been drawn. Stage 2 is not applied, and a
new round is triggered. Clearly this strategy ensures that the number
of red balls in $\mathcal C \cup \mathcal S$ is monotonically non
decreasing.

We model the effects of these rounds using a homogeneous Markov chain
denoted by $X = \{X_n, n \geq 0\}$ representing the evolution of
the number of red balls in both urns $\mathcal C$ and $\mathcal S$.
More formally, the state space $S$ of $X$ is defined by
$S = \{(x,y) \mid 0 \leq x \leq c, \; 0 \leq y \leq s\},$
and, for $n \geq 1$, the event $X_n =(x,y)$ means that, after the
$n$-th transition or $n$-th round, the number of red balls in urn
$\mathcal C$ is equal to $x$ and the number of red balls in urn
$\mathcal S$ is equal to $y$.  The transition probability matrix $P$ of
$X$ depends on the rule of the given game and on the adversarial
behaviours. This matrix is detailed in each of the following
subsections. In all the cases, the initial state $X_0$ is given by
$X_0 = (C_r,S_r)$ and its probability distribution is denoted by the
row vector $\alpha$ which is given by relation (\ref{init}), i.e.
$\alpha(x,y) = \Pr\{X_0=(x,y)\} = \Pr\{C_r = x,S_r = y\}.$

We define a state as \emph{polluted} if in that state urn $\mathcal C$
contains more than $\lfloor (c-1)/3 \rfloor$ balls. In the following, we denote by $c'$ the value  $\lfloor (c-1)/3 \rfloor$. Conversely,
a state that is not polluted is said \emph{safe}. 
The subset of safe states, denoted by $A$, is defined as:
$A  = \{(x,y) \mid 0 \leq x \leq c', \; 0 \leq y \leq s\},$
while the set of polluted states, denoted by 
 $B$, is  the subset $S-A$, i.e.
$B  = \{(x,y) \mid c'+1 \leq x \leq c, \; 0 \leq y \leq s\}.$
We partition matrix $P$ in a manner conformant to the decomposition of 
$S = A \cup B$, by writing
$$P = \left(\begin{array}{cc}
         P_A & P_{AB} \\
         P_{BA} & P_B \\
            \end{array}\right),$$
where $P_A$ (resp. $P_B$) is the sub-matrix of dimension $|A|\times |A|$
(resp. $|B|\times |B|$),
containing the transitions between states of $A$ (resp. $B$). 
In the same way, $P_{AB}$ (resp. $P_{BA}$) is the sub-matrix of dimension 
$|A|\times |B|$ (resp. $|B|\times |A|$),
containing the transitions from states of $A$ (resp. $B$) to states of $B$ 
(resp. $A$). We also partition the initial probability distribution $\alpha$ according
to the decomposition $S = A \cup B$, by writing $\alpha = (\alpha_A \;\; \alpha_B),$ 
where sub-vector $\alpha_A$ (resp. $\alpha_B$) contains the initial 
probabilities of states of $A$ (resp. $B$).

\subsection{First  game }
\label{sec:game1-active}

Regarding the first game, computation of the probabilities of the  transition matrix is illustrated in Figure~\ref{fig:tree}. In this tree, each edge is labelled by a probability and its corresponding event following the rule of the game (see Figure~\ref{fig:game}). This figure can be interpreted as follows:
  At round $r$, $r\geq 1$, starting from state  $(x;y)$ (root of the tree) the Markov chain can transit to four different states, namely $(x;y)$, $(x;y+1)$, 
  $(x+1;y)$, and $(x+1;y+1)$ (leaves of the tree). The probability associated to each one of these transitions is  obtained by summing the products of the probabilities discovered along each path starting from the root to the leaf corresponding to the target state. 

%
\tikzstyle{level 1}=[level distance=2cm, sibling distance=8cm]
\tikzstyle{level 2}=[level distance=2cm, sibling distance=3cm]
\tikzstyle{level 3}=[level distance=2cm, sibling distance=4.5cm]
\tikzstyle{level 4}=[level distance=2cm, sibling distance=2.cm]
%
\tikzstyle{bag} = [text width=3em, text centered]
\tikzstyle{end} = [circle, minimum width=3pt,fill, inner sep=0pt]
%
\begin{figure}\footnotesize
\centering
\begin{tikzpicture}[->,grow=down,scale=0.6]
\node[bag] {\tiny $(x;y)$}
    child {
        node[bag] {}    
            child {
                node[end, label=below:{\tiny $(x;y)$}]{}
                edge from parent
                node[left]{\tiny ($b_0$ is red) $\frac{x}{c}$\,}
            }
            child {
                node[bag]{}
child{
node[bag]{}
child{
node[end, label=below:{\tiny $(x+1;y-1)$}]{}
edge from parent
node[left]{\tiny ($b_2$ is white) {$1-\mu$}}
} 
child{ 
node[end, label=below:{\tiny $(x+1;y)$}]{}
edge from parent
node[right]{\tiny{$\mu$} \, ($b_2$ is red)}
}
edge from parent 
                            node[left] {\tiny    ($b_1$ is red) $\frac{y}{s}$\,}
                            }
child{
node[bag]{}
child{
node[end, label=below:{\tiny $(x;y+1)$}]{}
edge from parent
node[left]{\tiny $\mu$} 
} 
child{ 
node[end, label=below:{\tiny $(x;y)$}]{}
edge from parent
node[right]{\tiny $1-\mu$ ($b_2$ is white)}
}
edge from parent 
                            node[right] {$\frac{s-y}{s}$ \tiny ($b_1$ is white)}
                            }
edge from parent 
                   node[right] {$\frac{c-x}{c}$ \tiny ($b_0$ is white)} 
                   }
                  edge from parent 
                   node[left] {\tiny ($b_0 \in \mathcal C$)  \,$\frac{c}{c+s}$ \,}  
}
%
child {
node[bag] {}    
             child {
                node[end, label=below:{\tiny $(x;y)$}]{}
                edge from parent
                node[left]{\tiny ($b_0$ is red) $\frac{y}{s}$\,}
              }
             child {
                 node[bag]{}
                  child{
                  node[bag]{}
                  edge from parent
                   child{
                  node[end, label=below:{\tiny $(x;y)$}]{}
                  edge from parent
                  node[left]{\tiny $1-\mu$}
                  }
child{
node[end, label=below:{\tiny $(x;y+1)$}]{}
edge from parent
node[right]{\tiny $\mu$ ($b_2$ is red)}
}
node[right]{\tiny 1}
}
edge from parent 
                   node[right] {$\frac{s-y}{s}$ \tiny ($b_0$ is white)}
} 
edge from parent 
         node[right] {\,\,\,\,\,$\frac{s}{s+c}$ \tiny ($b_0 \in \mathcal S$)} 
}; 
\end{tikzpicture}
\caption{\small{Transition diagram for the computation of the 
transition probability matrix $P$ for the first game.}}
\label{fig:tree}
\end{figure}
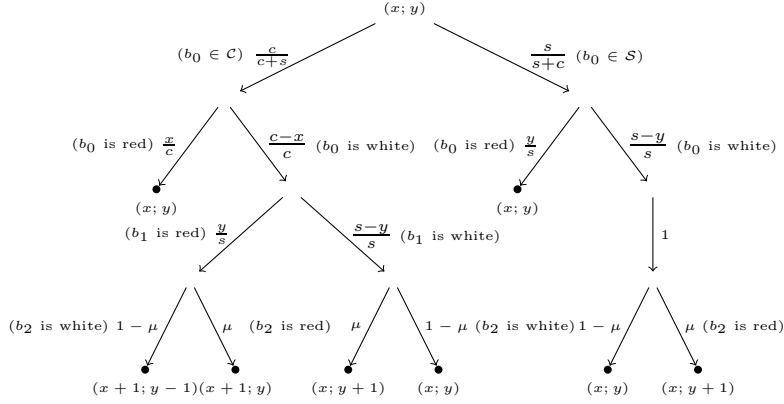

We can easily derive the transition
probability matrix $P$ of the Markov chain $X$ chain associated to this game.  
For all
$x \in \{0, \ldots, c\}$ and for all $y \in \{0, \ldots, s\}$, we have
\footnotesize
\begin{eqnarray*}
p_{(x,y),(x,y)}& = &\left(\frac{c}{c+s}\right) \left( \frac{x}{c} + 
\left(\frac{c-x}{c}\right)\left(\frac{s-y}{s}\right)(1-\mu)\right) + 
\left(\frac{s}{s+c}\right)\left(\frac{y}{s}\mu + 1-\mu\right) \\
p_{(x,y),(x,y+1)}& = &\left( \left(\frac{c}{c+s}\right)
\left( \frac{c-x}{c}\right) + \left(\frac{s}{s+c}\right) \right)
\left(\frac{s-y}{s}\right)\mu \;\;\; \mbox{ for }  y \leq s-1 \\
p_{(x,y),(x+1,y-1)}& = &\left(\frac{c}{c+s}\right)
\left( \frac{c-x}{c}\right) \frac{y}{s} (1-\mu) \;\;\; \mbox{ for } 
x \leq c-1 \mbox{ and } y \geq 1 \\
p_{(x,y),(x+1,y)}& = &\left(\frac{c}{c+s}\right)
\left( \frac{c-x}{c}\right) \frac{y}{s}\mu \;\;\; \mbox{ for } x \leq c-1.
\end{eqnarray*}
\normalsize
In all other cases, transition probabilities are null. 

Clearly, the adversary wins the game when the process $X$ reaches the
subset of states $B$ from which it cannot exit. Thus quorum $Q= \{(x,y) \mid (x,y) \in B\}.$ with $B$ the set of polluted states. By the rule of the game,
one can never escape from these states to switch to safe states since
the number of red balls in $\mathcal C$ is non decreasing. Thus there
is a finite random time $T$ after which the process $X$ is absorbed within
$B$. Thus we have $P_{BA}=0$. The Markov chain $X$ is reducible and
the states of $A$ are transient, which means that matrix $I-P_A$ is
invertible, where $I$ is the identity matrix of the right dimension
which is $|A|$ here.  Specifically $T$, the time needed to reach
subset $B$, is defined as $T = \inf\{n \geq 0 \mid X_n \in B\}.$
The cumulative 
distribution function of $T$ is easily derived as
\begin{equation}
\Pr\{T \leq k\} = 1 - \alpha_A (P_A)^{k}\nbu,
\end{equation}
where $\nbu$ is the column vector of the right dimension with all
components equal to $1$. The expectation of $T$ is given by
\begin{equation}
E(T) = \alpha_A (I - P_A)^{-1}\nbu,
\end{equation}
 


\subsection{Second game }
\label{sec:game2-active}

By proceeding similarly as above, we can derive the following transitions
of process $X$ associated to the second game. Briefly, when the game starts in state 
$(x,y)$ at round $r$, it remains in
state $(x,y)$ during the round  if either ball $b_0$ is
red or  $b_0$ is white, and has been drawn from $\mathcal S$, and $b_2$
is white. It changes to state $(x,y+1)$ if $b_0$ is white, it has been
drawn from $\mathcal S$, and $b_2$ is red. Finally the game switches to
state $(k, x+y-k+\ell)$, where $k$ is an integer $k=0,\ldots,
c'$ and $\ell=0$ or $1$ if $b_0$ is white, it has been drawn from
$\mathcal C$, and the renewal process leads to the choice of $k$ red
balls. 
 For all $x \in \{0, \ldots,c\}$ and $y \in \{0, \ldots, s\}$, we have
\footnotesize
\begin{eqnarray*}
p_{(x,y),(x,y)}&=&\left(\frac{c}{c+s}\right) \left(\frac{x}{c} \mu 
q(x,x+y-1) + \frac{c-x}{c} (1-\mu) q(x,x+y) \right)+\\ 
&& \left(\frac{s}{c+s}\right)\left(\frac{s-y}{s}\left(1-\mu\right) + 
\frac{y}{s}\mu\right) \\
p_{(x,y),(x,y+1)}&=&\left(\frac{c}{c+s}\right) \frac{x}{c} \mu q(x,x+y)   
+ \left(\frac{s}{c+s}\right)\left(\frac{s-x}{c} \mu \right) \mbox{ for }
y \leq s-1 \\
p_{(x,y),(x,y-1)}&= &\left(\frac{c}{c+s}\right)\frac{x}{c} (1-\mu) 
q(x,x+y-1) + \left(\frac{s}{c+s}\right)\left(\frac{s-x}{c} \mu \right) 
\mbox{ for }  y \geq 1 \\
p_{(x,y),(k,x+y-k)} &= &\left(\frac{c}{c+s}\right)
\left(\frac{c-x}{c}\right)\left(1-\mu\right)q(k,x+y) \\
&& \mbox{ for }  \max(0,x+y-s) \leq k \leq \min(c,x+y) 
\mbox{ and } k \neq x \\
p_{(x,y),(k,x+y-k+1)} &= &\left(\frac{c}{c+s}\right)
\left(\frac{c-x}{c}\right)\mu q(k,x+y) \\
&&\mbox{ for } \max(0,x+y+1-s) \leq k \leq \min(c,x+y+1) 
\mbox{ and } k \neq x
\end{eqnarray*}
\normalsize
where 
\footnotesize
$$q(x,x+y)=\frac{\DS{{x+y \choose x}{c+s-1-(x+y) \choose c-x}}}
                {\DS{{c+s-1 \choose c}}}$$
\normalsize
is the probability of getting $x$ red balls when $c$ balls are drawn, without 
replacement, in an urn containing $x+y$ red balls and $c+s-1 - (x+y)$ white 
balls, referred to as the hypergeometric distribution.
In all other cases, transition probabilities are null. 

In contrast to the first game, this game alternates between safe and polluted
states. After a random number of these alternations
the process ends by entering a set of closed polluted  states. Indeed, by the rule of the game, one can escape finitely
often from polluted state $(x;y)$ to switch back to a safe state as long as 
$(x;y)$ satisfies $c'+1 \leq x+y \leq s+c'$ (there are still sufficiently many white 
 balls in both $\mathcal C$ and  $\mathcal S$ so as to successfully withdrawing
$c$ balls such that $\mathcal C$ can be reverted to a safe state). However,
there is a time $T_D$ when  state $(x;y)$, with $x+y \geq s+c'+1$,
is entered. From $T_D$ onwards, going back to safe states
is impossible. Thus at time $T_D$ the adversary wins the game. Hence an interesting metrics to be evaluated is the total time spent by the
process in safe states before being definitely absorbed in polluted states.

Formally, we need to decompose the set $B$ of polluted states  into
two subsets $C$ and $D$
defined by
$C = \{(x;y) \mid c'+1 \leq x+y \leq s+c', \; c'+1 \leq x \leq c, \; 0 \leq y \leq s\},$
and
$D = \{(x;y) \mid x+y \geq s+c'+1, \; 0 \leq y \leq s\}.$
Subsets $A$ and $C$ are transient and subset $D$ is a closed subset.
We partition matrix $P$ and initial probability vector $\alpha$
following the decomposition of
$S = A \cup C \cup D$, by writing
$$P = \left(\begin{array}{ccc}
         P_A & P_{AC} & 0 \\
         P_{CA} & P_C & P_{CD} \\
           0    &  0  & P_D \\
            \end{array}\right)
\ \mbox{and} \ 
\alpha = (\alpha_A \;\; \alpha_C \;\; \alpha_D).$$
\noindent
Figure~\ref{fig:aggregated} illustrates the  states  partition of the process $X$. 
\begin{figure}
\centering
\begin{tikzpicture}[->,>=stealth',shorten >=1pt,auto,node distance=2cm,
                    semithick]

  \node[state]	(A) {A};
  \node[state]  (C) [right of=A] {C};
  \node[state]  (D) [right of=C] {D};

  \path (A) edge [loop above] node {} (A)
            edge              node {} (C)
        (C) edge [loop above] node {} (C)
            edge              node {} (A)
            edge              node {} (D)
        (D) edge [loop above]	node {} (D);
\end{tikzpicture}
\caption{\small An aggregated  view of the Markov chain associated to the second game. Safe states are represented by A, and polluted states by C and D.}
\label{fig:aggregated}
\end{figure}
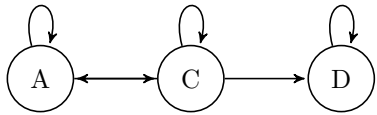

We are interested in the random variable $T_A$ which counts 
the total time spent in subset $A$ before reaching subset $D$.
Following the result obtained in~\cite{sericola}, 
we have, for every $k \geq 0$,
\begin{equation}
\Pr\{T_A \leq k\} = 1 - vG^{k}\nbu,
\end{equation}
where 
$v = \alpha_A + \alpha_C (I-P_C)^{-1}P_{CA} \mbox{ and }
  G = P_A  + P_{AC}(I-P_C)^{-1}P_{CA}.$
The expected total time spent in $A$ is given by
\begin{equation}
E(T_A) = v (I-G)^{-1}\nbu.
\end{equation}

Figure~\ref{fig:expectation-game} compares the expectation of the time spent in safe states for both games. In accordance with the intuition, increasing the size of the urns augments the expected time spent in safe states of both games, i.e., $E(T)$ and $E(T_A)$, independently of the ratio of red balls in the bag. Similarly, for a given cluster size, increasing the ratio of red balls in the bag drastically decreases both $E(T)$ and $E(T_A)$. However surprisingly enough, increasing the level of randomness (game 2 vs. game 1) does not increase the resilience to the adversary behavior since the first game  always overpasses the second one in expectation. It is even more true when $\mathcal S$ size is large with respect to $\mathcal C$ one. The intuition behind this fact is as follows: when $\mathcal S$ size is equal to 1, both games are equivalent   as illustrated in Figure~\ref{fig:expectation-game} for $s=1$. Now,  consider the case where the size of $\mathcal S$ is large with respect to $\mathcal C$ one. First of all, note that  the probability to draw a ball from  $\mathcal S$ tends to $1$, and because the adversary never withdraw its red balls from any urns,  the ratio of red balls within   $\mathcal S$ is monotonically non decreasing. Hence, the ratio of red balls in $\mathcal S$ tends also to 1. With small probability, a ball from $\mathcal C$ is drawn. In the first game   it is replaced with high probability by a red ball drawn from $\mathcal S$. Hence to reach a polluted state, at least $c'$ white balls have to be replaced by red ones. While in the second game  with high probability, the renewal of $\mathcal C$ reaches a polluted state in a single step. From this crude reasoning we can derive that the ratio of $E(T)$ over $E(T_A)$ tends to $c'$. 

%

\begin{figure}[htbp]
\centering
\subfigure[]{\includegraphics[height=8.5cm,width=8cm,angle=270]{./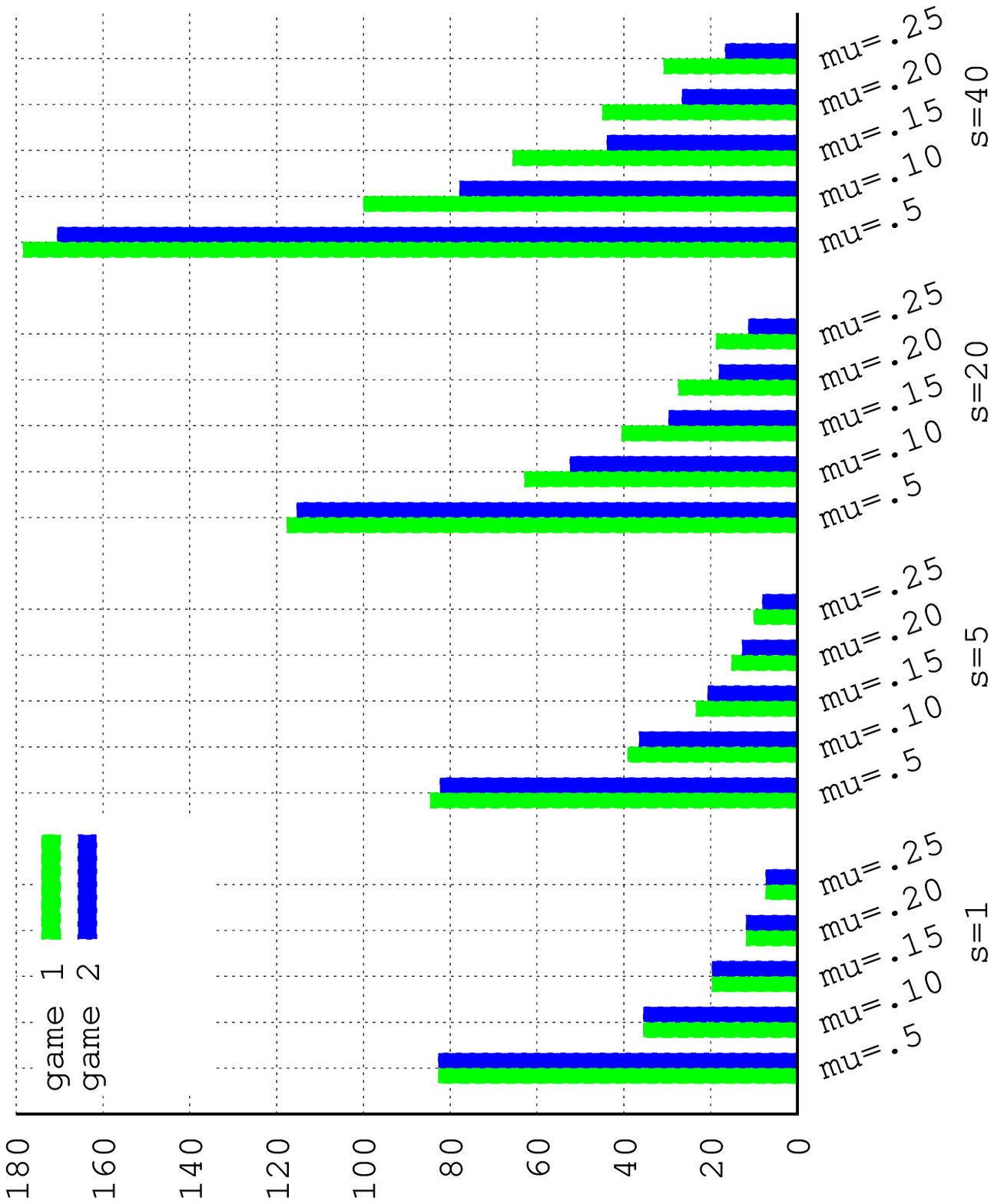}\label{fig:expectation-game}}
\subfigure[]{\includegraphics[height=8cm,width=7cm,angle=270]{./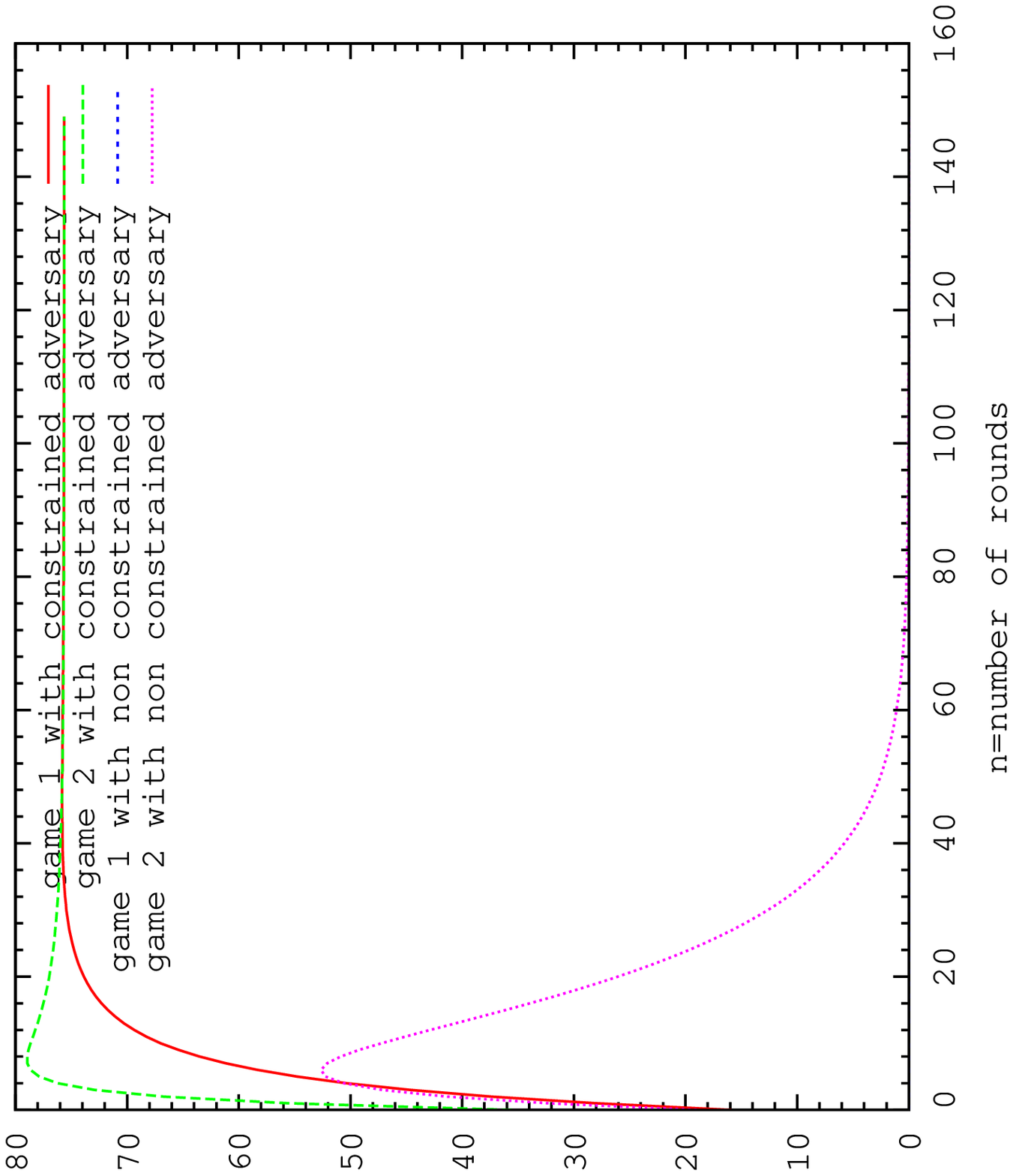}\label{fig:cluster}}
\caption{\small (a) Expectation of the number of rounds spent in safe states for games 1 and 2  function of $\mathcal S$ size and the ratio of malicious nodes $\mu$ as resp. given by relations~(\ref{eq:et}) and~(\ref{eq:eta}). (b) Mean number of safe clusters $E(N_n)$ (relation~(\ref{eq:enn})) in function of the rounds number $n$ for both games and both kind of adversaries. There are l=100 clusters, and the ratio of red balls in the bag is equal to .25 and $c=7$. Note that the initial number of safe clusters is equal to 16.}
\end{figure}

\section{Constraining the adversary}
\label{sec:constraint-model}

Our next step is  to evaluate the benefit of constraining the adversary by
limiting the sojourn time of its balls in both urns, so that randomness among red and white balls is continuously preserved. In the model we propose, 
we assume that the adversary cannot prevent red balls from being withdrawn for both urns. 

By proceeding as in Sections~\ref{sec:game1-active}
and~\ref{sec:game2-active}, we can derive the transition probability
matrix $P$ for both games. For all $x \in \{0, \ldots, c\}$ and
$y \in \{0, \ldots, s\}$, the entries of $P$ are given, for the first game, by
\footnotesize
\begin{eqnarray}
p_{(x,y),(x,y)}& = &\frac{xy + (c(s-y)-xs)(1-\mu)}{(c+s)s}) +
\frac{y \mu + (s-y)(1-\mu)}{c+s} \nonumber \\
p_{(x,y),(x,y-1)}& = &\frac{(x+s)y}{(c+s)s}(1-\mu) \mbox{ for } y \geq 1 
\nonumber \\
p_{(x,y),(x,y+1)}& = &\left(\frac{c-x+s}{c+s}\right)
\left(\frac{s-y}{s}\right)\mu \mbox{ for }  y \leq s-1 \nonumber \\
p_{(x,y),(x+1,y-1)}& = &\frac{(c-x)y}{(c+s)s}(1-\mu) 
\mbox{ for } x \leq c-1 \mbox{ and } y \geq 1 \label{V1} \\
p_{(x,y),(x+1,y)}& = &\frac{(c-x)y}{(c+s)s}\mu
\mbox{ for } x \leq c-1 \nonumber \\
p_{(x,y),(x-1,y)}& = &\frac{x(s-y)}{(c+s)s}(1-\mu) \mbox{ for } x \geq 1 
\nonumber \\
p_{(x,y),(x-1,y+1)}& = &\frac{x(s-y)}{(c+s)s}\mu
\mbox{ for }  x \geq 1 \mbox{ and } y \leq s-1. \nonumber 
\end{eqnarray}
\normalsize
In all other cases, transition probabilities are null. 
Similarly for second game , for all $x \in \{0, \ldots, c\}$ and 
$y \in \{0, \ldots, s\}$, we have
\footnotesize
\begin{eqnarray}
p_{(x,y),(x,y)}& = &\frac{x q(x,x+y-1)\mu + (c-x)q(x,x+y)(1-\mu)}{c+s} +
\frac{y \mu + (s-y) (1-\mu)}{c+s} \nonumber \\
p_{(x,y),(x,y-1)}& = &\frac{x}{c+s}q(x,x+y-1)(1-\mu) + \frac{y}{c+s}(1-\mu) 
\mbox{ for }  y \geq 1 \nonumber \\
p_{(x,y),(x,y+1)}& = &\frac{c-x}{c+s}q(x,x+y)\mu + \frac{s-y}{c+s}\mu 
\mbox{ for }  y \leq s-1 \nonumber \\
p_{(x,y),(k,x+y-k-1)}& = &\frac{x}{c+s}q(k,x+y-1)(1-\mu) \label{V2} \\ 
&&\mbox{ for } \max(0,x+y-1-s) \leq k \leq \min(c,x+y-1) \mbox{ and } k \neq x
\nonumber \\
p_{(x,y),(k,x+y-k)}& = &\frac{x}{c+s}q(k,x+y-1)\mu
+ \frac{c-x}{c+s}q{(k,x+y)}(1 - \mu) \nonumber \\
&&\mbox{ for } \max(0,x+y-s) \leq k \leq \min(c,x+y-1) \mbox{ and } k \neq x
\nonumber \\
p_{(x,y),(k,x+y-k+1)}& = &\frac{c-x}{c+s}q{(k,x+y)}\mu \nonumber \\
&&\mbox{ for } \max(0,x+y+1-s) \leq k \leq \min(c,x+y) \mbox{ and } k \neq x,
\nonumber 
\end{eqnarray}
\normalsize
where we set $q(u,v) = 0$ when $u > v$. In all other cases, transition probabilities are null. 

It is not difficult to see that none of the games exhibit an absorbing
class of states (i.e., both games never ends). 
We have $P_{BA} \neq 0$ and
the process $X$ is irreducible and aperiodic since at least
one  state has a transition to itself. 
The distribution of
the time $T$ needed to reach subset $B$ is given, for every $k \geq 0$, by
\begin{equation} \label{hitting}
\Pr\{T \leq k\} = 1 - \alpha_A (P_A)^{k}\nbu.
\end{equation}

We denote by $\pi$ the stationary distribution of the Markov chain $X$.
The row vector $\pi$ is thus the solution to the linear system
$$\pi = \pi P \mbox{ and } \pi \nbu = 1.$$
As we did for row vector $\alpha$, we partition  $\pi$ according
to the decomposition $S = A \cup B$, by writing $\pi = (\pi_A \;\; \pi_B),$ 
where sub-vector $\pi_A$ (resp. $\pi_B$) contains the stationary 
probabilities of states of $A$ (resp. $B$).

\begin{theorem} \label{stat}
For both games 1 and 2, the stationary distribution $\pi$ is equal to 
$\alpha$, i.e.
for all $x=0,\ldots,c$ and $y=0,\ldots,s$, we have
$$\lim_{n \longrightarrow \infty} \Pr\{X_n = (x,y)\} = \alpha(x,y),$$
which is given by relation (\ref{init}).
\end{theorem}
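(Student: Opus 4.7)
\noindent\textbf{Proof plan for Theorem \ref{stat}.}

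Since $X$ is finite, irreducible and aperiodic (as noted just after equation (\ref{hitting})), it has a unique stationary distribution $\pi$, and that distribution coincides with $\lim_{n \to \infty}\Pr\{X_n=(x,y)\}$. It therefore suffices to show that the product-binomial vector $\alpha$ given by (\ref{init}) satisfies $\alpha P = \alpha$ for the transition matrix of each game.

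Rather than verify $\alpha P = \alpha$ by summing the entries of (\ref{V1}) or (\ref{V2}) by hand, I would lift $X$ to a finer Markov chain $\tilde{X}$ that records the colour of each ball at every labelled position in the urns: positions $1,\ldots,c$ in $\mathcal{C}$ and positions $1,\ldots,s$ in $\mathcal{S}$. Its state space is $\{0,1\}^{c+s}$ (with $1$ meaning red) and the projection $\Phi$ sending a colour configuration to the pair (number of reds in $\mathcal{C}$, number of reds in $\mathcal{S}$) recovers $X$. Let $\beta$ be the product Bernoulli$(\mu)$ measure on $\{0,1\}^{c+s}$. Since the coordinates are independent, the push-forward $\Phi_{*}\beta$ is exactly $\alpha$; so if I can show that $\beta$ is invariant for $\tilde{X}$, then $\alpha$ is invariant for $X$, and uniqueness of $\pi$ forces $\pi = \alpha$.

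Invariance of $\beta$ splits into four cases, in each of which the verification reduces to exchangeability of iid variables under permutation and the independence of fresh draws from the bag. If $b_0$ is drawn from $\mathcal{S}$ (either game), a uniformly chosen $\mathcal{S}$-position is overwritten by an independent Bernoulli$(\mu)$ sample from the bag while all other positions are untouched, so the product law is trivially preserved. If $b_0$ is drawn from $\mathcal{C}$ in game 1, the colour at a uniformly random $\mathcal{S}$-position $j$ is moved to the vacated $\mathcal{C}$-position $i$ and an independent Bernoulli$(\mu)$ sample fills position $j$; under $\beta$ the preserved coordinates and the moved coordinate are jointly iid, and the fresh draw extends this. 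If $b_0$ is drawn from $\mathcal{C}$ in game 2, the remaining $c+s-1$ balls are uniformly permuted across the $c$ slots of $\mathcal{C}$ and the first $s-1$ slots of $\mathcal{S}$, and the last $\mathcal{S}$-slot is filled by an independent Bernoulli$(\mu)$ sample; exchangeability again preserves the iid law and the fresh draw is independent.

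The main obstacle is bookkeeping rather than conceptual: one has to check that marginalising the position-level dynamics over the uniform choices of positions reproduces exactly the transition probabilities listed in (\ref{V1}) and (\ref{V2}), thereby confirming that $\Phi$ is a Markov homomorphism from $\tilde{X}$ to $X$. Once that verification is in place, invariance of $\beta$ for $\tilde{X}$ projects to invariance of $\alpha$ for $X$, and the theorem follows from the uniqueness of the stationary distribution of the irreducible aperiodic chain $X$.
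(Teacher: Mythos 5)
Your proposal is correct, and it takes a genuinely different route from the paper. The paper's proof also starts from uniqueness of the stationary distribution of the finite, irreducible, aperiodic chain, but then verifies $\alpha P=\alpha$ by direct computation: it writes down ratio recurrences for the binomial vector $\alpha$, plugs in the explicit entries (\ref{V1}) for game 1 and (\ref{V2}) for game 2, and for game 2 carries out a fairly heavy manipulation involving sums over the sets of states with fixed total number of red balls and identities for the hypergeometric weights $q(\cdot,\cdot)$. You replace all of that algebra by a lifting argument: tag each slot of $\mathcal C$ and $\mathcal S$ with a colour, observe that the product Bernoulli$(\mu)$ law is preserved because, for the \emph{constrained} adversary, every step is colour-oblivious (the vacated slot, the moved ball in game 1, and the permutation in game 2 are chosen independently of the colours) and the refill from the bag is an independent Bernoulli$(\mu)$ draw, and then push forward through the counting map. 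This is sound: invariance of $\alpha$ follows once you check that the lumped one-step probabilities from any configuration depend only on the counts and equal the entries of (\ref{V1})/(\ref{V2}) — a re-derivation of what the paper did when it constructed $P$, so it is bookkeeping rather than a gap, and you correctly flag it. What your route buys is an explanation of the phenomenon: it shows in one stroke why both games have the same stationary law, why it coincides with $\alpha$, and why it does not depend on $s$, and it even clarifies the paper's remark after the theorem — the relevant independence lives at the level of individual ball colours, not at the level of the two count processes, which are indeed not independent. Do make explicit that the colour-obliviousness of the draws is exactly the constrained-adversary hypothesis; it is the point where the argument (and the theorem) fails for the unconstrained games of Section~\ref{sec:model}. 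The paper's computation, by contrast, is self-contained given the matrix entries and requires no auxiliary chain, at the price of the hypergeometric algebra for game 2.
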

\begin{proof}
For space reasons, we omit the proof of the theorem. The interested reader is invited to read it in the Appendix. 
\cqfd
\end{proof}
Theorem~\ref{stat}  is interesting in two aspects. First it shows  that the stationary distribution $\pi$ is exactly the same for both games, and second, that this distribution is equal to the initial distribution $\alpha$. At a  first glance, we could guess that  this phenomenon  is due to the fact that the Markov chain $X$ is the tensor product of two independent Markov chains, representing respectively  the evolution of the red balls in $\mathcal C$ and $\mathcal S$. Although this is clearly not the case as the behavior of red balls in $\mathcal C$ depends on the behavior of red balls in $\mathcal S$. This holds for both games.

The stationary availability of the system defined by the long run
probability to be
in safe states is denoted by $P_{{\rm safe}}$ and is given by
$$P_{{\rm safe}} = \pi_A \nbu = 
\sum_{x=0}^{c'} {c \choose x} \mu^x (1-\mu)^{c-x}.$$
This probability can also be interpreted as the long run proportion of time
spent in safe states. Note that the stationary distribution does not depend on  the size of  $\mathcal S$.

Now let us consider that we have $\ell$ identical and independent Markov 
chains $X^{(1)}, \ldots, X^{(\ell)}$ on the same state space $S= A\cup \{S \setminus A\}$,
with initial probability distribution $\beta$ and transition 
probability matrix $P$. The probability distribution $\beta$ represents the state $(0;0)$, i.e., the safest state. Each Markov chain models  a particular cluster of
nodes and, for $n \geq 0$,  $N_n$ represents the number of safe clusters 
after the $n$-th round, i.e. the number of Markov chains being in subset $A$
after the $n$-th transition has been triggered, defined by
$$N_n = \sum_{j=1}^{\ell} 1_{\{X_n^{(j)} \in A\}}.$$
The $\ell$ Markov chains being identical and independent, $N_n$ has a binomial
distribution, that is, for $k=0,\ldots,\ell$, we have
\begin{eqnarray*}
\Pr\{N_n = k\} & = & {\ell \choose k} \left(\Pr\{X_n^{(1)}\in A\}\right)^k 
                     \left(1 - \Pr\{X_n^{(1)}\in A\}\right)^{\ell - k} \\
& = & {\ell \choose k} \left(\beta P^n \nbu_A\right)^k
                       \left(1 - \beta P^n \nbu_A\right)^{\ell - k}
\end{eqnarray*}
and 
$$E(N_n) = \ell \beta P^n \nbu_A,$$
where $\nbu_A$ is the column vector with the $i$-th entry equal to $1$ if 
$i \in A$ and equal to $0$ otherwise.
If $N$ denotes the stationary number of safe clusters, we have,
for $k=0,\ldots,\ell$,
$$
\begin{array}{rcll}
\Pr\{N = k\} &=& {\ell \choose k} \left(\pi_A\nbu\right)^k 
                 \left(1 - \pi_A\nbu\right)^{\ell - k} &\mbox{ for a constrained adversary } \\
                & = & 0 & \mbox{ for a non constrained adversary}
                \end{array}
                $$
and
$$
\begin{array}{rcll}
E(N) &= &\ell \pi_A\nbu &\mbox{ for a constrained adversary }\\
	& = & 0 & \mbox{ for a non constrained adversary}
	\end{array}$$

These results are illustrated in Figure~\ref{fig:cluster}. We can observe that with a constrained adversary, the ratio of safe clusters tends to the same limit for both games, whatever the amount of initially safe clusters (less than a 1/4), while with a non constrained adversary eventually  all the clusters  get polluted.

\section{Conclusion}
In this paper, we have proposed a mechanism that enables the enforcement of limited nodes lifetime compliant with DHT-based overlays specificities. We have  investigated several adversarial strategies. Our analysis  has demonstrated  that
 an adversary can easily  subvert a cluster-based overlay  by simply never triggering leave operations. We have  then shown  that  when  nodes have to regularly leave the system, eventually this one reaches a stationary regime where the ratio of malicious nodes is bounded. 
 
 For future work, we plan to implement this limited node lifetime mechanism in PeerCube to study its impact on the induced churn and its management overhead.  We are convinced that this additional churn will be efficiently amortised  thanks to the organisation of nodes in core and spare sets.
 
\bibliographystyle{plain}\small
\bibliography{references}
\newpage
\section*{Appendix}
\begin{theorem-repeat}{stat}
For both games 1 and 2, the stationary distribution $\pi$ is equal to 
$\alpha$, i.e.
for all $x=0,\ldots,c$ and $y=0,\ldots,s$, we have
$$\lim_{n \longrightarrow \infty} \Pr\{X_n = (x,y)\} = \alpha(x,y),$$
which is given by relation (\ref{init}).
\end{theorem-repeat}

\begin{proof}
For both games, the Markov chain $X$ is finite, irreducible and aperiodic
so the stationary distribution exists and is unique. It thus suffices to 
show that for both games we have $\alpha = \alpha P$, i.e. for all
$i \in \{0,\ldots,c\}$ and $j \in \{0,\ldots,s\}$, we have
$$(\alpha P)(i,j) = 
\sum_{u=0}^c \sum_{v=0}^s \alpha(u,v) p_{(u,v),(i,j)} = \alpha(i,j).$$
First of all, note that, from relation (\ref{init}), we have
$$\begin{array}{lcll}
\alpha(i,j+1) & = & \alpha(i,j)\frac{(s-j)\mu}{(j+1)(1-\mu)} 
& \mbox{ for } j \leq s-1, \\\\
\alpha(i,j-1) & = & \alpha(i,j)\frac{j(1-\mu)}{(s-j+1)\mu} 
& \mbox{ for } j \geq 1, \\\\
\alpha(i-1,j+1) & = & \alpha(i,j)\frac{i(s-j)}{(c-i+1)(j+1)} 
& \mbox{ for } i \geq 1 \mbox{ and } j \leq s-1, \\\\
\alpha(i-1,j) & = & \alpha(i,j)\frac{i(1-\mu)}{(c-i+1)\mu} 
& \mbox{ for } i \geq 1, \\\\
\alpha(i+1,j) & = & \alpha(i,j)\frac{(c-i)\mu}{(i+1)(1-\mu)} 
& \mbox{ for } i \leq c-1, \\\\
\alpha(i+1,j-1) & = &\alpha(i,j)\frac{(c-i)j}{(i+1)(s-j+1)} 
& \mbox{ for } i \leq c-1 \mbox{ and } j \geq 1.
\end{array}$$

For first game , the transition probability matrix $P$ is given by relations
(\ref{V1}). Using these relations and relations above, we obtain
for $i =1,\ldots,c-1$ and $j = 1,\ldots,s-1$,
\begin{eqnarray*}
(\alpha P)(i,j) & = & \alpha(i,j)p_{(i,j),(i,j)} + 
              \alpha(i,j+1)p_{(i,j+1),(i,j)} + \alpha(i,j-1)p_{(i,j-1),(i,j)} \\
                &   & + \; \alpha(i-1,j+1)p_{(i-1,j+1),(i,j)} + \alpha(i-1,j)p_{(i-1,j),(i,j)} \\
                &   & + \; \alpha(i+1,j)p_{(i+1,j),(i,j)} +  \alpha(i+1,j-1)p_{(i,j),(i,j)} \\ 
& = & \alpha(i,j) \left(\frac{ij\mu + (c-i)(s-j)(1-\mu)}{(c+s)s} + 
                        \frac{j\mu + (s-j)(1-\mu)}{c+s}\right. \\
&   & + \; \frac{\mu(s-j)i}{(c+s)s} + \frac{\mu(s-j)}{c+s}
+ \frac{(1-\mu)j(c-i)}{(c+s)s} + \frac{(1-\mu)j}{c+s} \\
&   & \left. + \; \frac{i(1-\mu)}{c+s} 
+ \frac{(c-i)\mu}{c+s}\right) \\
& = & \alpha(i,j).
\end{eqnarray*}
When $i=0$ or $i=c$ and $j=0$ or $j=s$ we obtain the same result more easily.

 For second game , the transition probability matrix $P$ is given by relations
(\ref{V2}). For $i=1,\ldots,c-1$ and $j=1,\ldots,s-1$, we have
\begin{eqnarray*}
(\alpha P)(i,j) & = & \alpha(i,j) \frac{j\mu + (s-j)(1-\mu)}{c+s}
                      + \alpha(i,j+1) \frac{(j+1)(1-\mu)}{c+s} \\
&  & + \; \alpha(i,j-1) \frac{(s-j+1)\mu}{c+s} + \sum_{(u,v) \in S_{i+j+1}} \alpha(u,v) \frac{u(1-\mu)}{c+s}q(i,i+j) \\
&  & + \sum_{(u,v) \in S_{i+j}} \alpha(u,v) \left(\frac{u\mu}{c+s}q(i,i+j-1) 
     + \frac{(c-u)(1-\mu)}{c+s}q(i,i+j)\right) \\
&  & + \sum_{(u,v) \in S_{i+j-1}} \alpha(u,v) \frac{(c-u)\mu}{c+s}q(i,i+j-1), 
\end{eqnarray*}
where $S_\ell$ is the set defined by 
$S_\ell = \{(u,v) \mid 0 \leq u \leq c, \; 0 \leq v \leq c \mbox{ and } 
u+v = \ell\}.$
Using the recurrence relations above on $\alpha$ and two variables changes
$u:=u+1$ and $u:=u-1$, we obtain
\begin{eqnarray*}
(\alpha P)(i,j) & = 
 & \alpha(i,j) \frac{s}{c+s} 
+ \sum_{(u,v) \in S_{i+j}} \alpha(u,v) \frac{(c-u)\mu}{c+s}q(i,i+j) \\
&  & + \sum_{(u,v) \in S_{i+j}} \alpha(u,v) \left(\frac{u\mu}{c+s}
q(i,i+j-1) + \frac{(c-u)(1-\mu)}{c+s}q(i,i+j)\right) \\
&  & + \sum_{(u,v) \in S_{i+j}} \alpha(u,v) \frac{u(1-\mu)}{c+s} q(i,i+j-1),
\end{eqnarray*}
which leads to
$$(\alpha P)(i,j) = \frac{\alpha(i,j)s}{c+s}
+ \sum_{(u,v) \in S_{i+j}} \alpha(u,v) \left(\frac{c-u}{c+s}q(i,i+j) 
+ \frac{u}{c+s} q(i,i+j-1)\right).$$
By definition of $q(i,i+j)$, we have
$$q(i,i+j-1) = q(i,i+j)\frac{j(c+s-(i+j))}{(i+j)(s-j)}$$
and by definition of $\alpha(u,v)$, we have
$$\sum_{(u,v) \in S_{i+j}} u \alpha(u,v) = {c+s \choose i+j} \mu^{i+j} 
(1-\mu)^{c+s-(i+j)}\frac{(i+j)c}{c+s},$$
and thus
$$\sum_{(u,v) \in S_{i+j}} (c-u) \alpha(u,v) = {c+s \choose i+j} \mu^{i+j} 
(1-\mu)^{c+s-(i+j)}\frac{c(c+s-(i+j))}{c+s}.$$
This leads to
$$(\alpha P)(i,j) = \frac{\alpha(i,j)s}{c+s} + 
\frac{\DS{{c+s \choose i+j} \mu^{i+j} (1-\mu)^{c+s-(i+j)} 
q(i,i+j) cs(c+s-(i+j))}}{(c+s)^2(s-j)}.$$
Again, by definition of $q(i,i+j)$, we have
$${c+s \choose i+j} \mu^{i+j} (1-\mu)^{c+s-(i+j)} q(i,i+j)
= \alpha(i,j)\frac{(c+s)(s-j)}{s(c+s-(i+j))},$$
which gives $(\alpha P)(i,j) = \frac{\alpha(i,j)s}{c+s} + \frac{\alpha(i,j)c}{c+s} =
\alpha(i,j).$
As for game 1, the result for frontier states is easier to derive.
\cqfd
\end{proof}

\end{document}